\newtheorem{theorem}{Theorem}[section]
\newtheorem{lemma}[theorem]{Lemma}
\newcommand{\R}{\mathbb{R}}
\newcommand{\E}{\mathbb{E}}
\DeclareMathOperator*{\argmin}{arg\,min}
\begin{document}
\begin{frontmatter}

\title{Dual Likelihood for Causal Inference under Structure Uncertainty}
\runtitle{Dual Likelihood for Causal Inference under Structure Uncertainty}

\author{\fnms{David} \snm{Strieder}\corref{}\ead[label=e1]{david.strieder@tum.de}}
\and
\author{\fnms{Mathias} \snm{Drton}\ead[label=e2]{mathias.drton@tum.de}}
\address{Munich Center for
Machine Learning and Department of Mathematics, \\ TUM School of Computation, Information and Technology, Technical University of Munich \\
	\printead{e1,e2}}

\runauthor{D. Strieder and M. Drton}

\begin{abstract}%
Knowledge of the underlying causal relations is essential for inferring the effect of interventions in complex systems. In a widely studied approach, structural causal models postulate noisy functional relations among interacting variables, where the underlying causal structure is then naturally represented by a directed graph whose edges indicate direct causal dependencies. In the typical application, this underlying causal structure must be learned from data, and thus, the remaining structure uncertainty needs to be incorporated into causal inference in order to draw reliable conclusions. In recent work, test inversions provide an ansatz to account for this data-driven model choice and, therefore, combine structure learning with causal inference. In this article, we propose the use of dual likelihood to greatly simplify the treatment of the involved testing problem. Indeed, dual likelihood leads to a closed-form solution for constructing confidence regions for total causal effects that rigorously capture both sources of uncertainty: causal structure and numerical size of nonzero effects. The proposed confidence regions can be computed with a bottom-up procedure starting from sink nodes. To render the causal structure identifiable, we develop our ideas in the context of linear causal relations with equal error variances.
\end{abstract}

\begin{keyword}[class=MSC]
\kwd[Primary ]{62D20, 62H22}
\end{keyword}

\begin{keyword}
\kwd{linear structural causal models}
\kwd{causal effects}
\kwd{dual likelihood}
\kwd{uncertainty quantification}
\kwd{graphical models}
\end{keyword}

\end{frontmatter}

\section{Introduction}
Reasoning about causal relations and inferring the effect of interventions in complex systems is a central task of scientific research. The field of causal discovery addresses this challenge with much fundamental research being done over the last decades \citep{pearl:book, spirtes:book}. In particular, in many applied settings, access to interventional data is limited, and thus, performing classical controlled experiments to investigate causal relations is not feasible. To tackle this problem, various identifiability results clarify under which conditions interventional distributions can theoretically be identified from observational data alone. Furthermore, many structure learning algorithms have been proposed that estimate causal structures using only observed data. On the other hand, when the underlying causal structure is known, causal inference results enable researchers to estimate causal quantities in complex systems and assess remaining uncertainty. However, when the underlying causal structure is learned from data, the remaining structure uncertainty needs to be incorporated into causal inference in order to draw reliable conclusions.

Recently, \citet{strieder:2021} introduced a first ansatz for constructing confidence regions for causal effects that account for both types of uncertainty, uncertainty about the causal structure and uncertainty about the numerical size of the effect. They consider bivariate linear structural causal models (SCMs) and develop a solution based on test inversion.  Their strategy was improved and generalized to higher-dimensional linear SCMs in \citet{strieder:2023}.   In this work, we take up their idea of using a test inversion ansatz to construct confidence regions for total causal effects (see Section \ref{section:background}).  However, in contrast to the earlier work, we suggest here to employ dual likelihood for the arising testing problem.  An important consequence of the use of dual likelihood is that we are able to provide a closed-form solution for our proposed confidence regions (Section \ref{section:result}) and can, thus, avoid the numerical optimization algorithms deployed by \citet{strieder:2023}. Furthermore,  while the earlier methods rely on a top-down procedure starting from the source node, by employing dual likelihood, we obtain a bottom-up procedure starting from the sink node. Thus, our proposed dual likelihood method provides an interesting alternative perspective on strategies to save on computation. Finally, in Section \ref{section:simulation}, we present computational details and analyze the performance in a simulation study. Further simulation results can be found in Appendix \ref{appendix:simulation}. In the concluding Section \ref{section:discussion}, we discuss extensions and possible future work.

\section{Background}\label{section:background}  
A common tool for studying causal relationships among interacting variables are structural causal models \citep{peters:book, handbook}, where each variable is represented as a function of other variables (its causes) and a random error term. The causal perspective views these relationships as assignments rather than mathematical equations, and thus, changes in the causes result in changes in the effects but not vice versa, reflecting the inherent asymmetry in cause-effect relationships. In this section, we review a restricted class of causal models, namely linear structural causal models with Gaussian errors and equal error variances \citep{peters:2014}, as well as the definition of the total causal effect, which is the causal target of interest in this study. Further, we review the test inversion ansatz introduced by \citet{strieder:2021} to construct confidence regions for causal effects that account for structure uncertainty.

\subsection{Causal Effects in Linear Structural Causal Models}
We assume access to observational data in the form of $n$ independent copies of a random vector $X = (X_1, . . . , X_d)$ with zero mean. To ensure unique identifiability, we follow a line of research introduced by \citet{peters:2014} that focuses on linear relations and normal distributed errors with equal variances, represented by the equation system
\begin{equation} \label{eq:LSEM}
    X_j=\sum_{i\neq j}\beta_{j,i}X_i + \varepsilon_j, \quad j=1, \dots ,d,
\end{equation}
where $B:=[\beta_{j,i}]_{j,i=1}^d$ represents the direct causal effects between variables, and $\varepsilon_j$ are independent, normally distributed error terms with common variance $\sigma^2$. Such a linear structural causal model is naturally represented by a directed graph, where a missing edge $i\rightarrow j$ indicates no direct effects, that is, $b_{j,i}=0$. Further, we assume the underlying graph to be acyclic, which entails the unique solution $X=(I_d-B)^{-1} \varepsilon$ of the equation system \eqref{eq:LSEM}, where $I_d$ denotes the $d\times d$ identity matrix. The corresponding covariance matrix is given by $\Sigma=\sigma^2(I_d-B)^{-1}(I_d-B)^{-T}$.

In the remainder of the article, we use the following notation and graphical concepts. We write $i<_{G}j$ if node $i$ precedes node $j$ in a causal ordering of the corresponding directed acyclic graph (DAG). If the DAG contains an edge from node $i$ to node $j$, then node $i$ is called a parent of node $j$, and we denote the set of all parents of node $j$ with $p(j)$. Further, if the DAG contains a directed path from node $i$ to node $j$, then node $j$ is called a descendant of node $i$, and we denote the set of all descendants of node $i$ with $d(i)$. Finally, we write $\Sigma_{j,i|p(i)}$ for the conditional covariance matrix, that is,
    \begin{equation*}
        \Sigma_{j,i|p(i)}:=\Sigma_{j,i}-\Sigma_{j,p(i)}(\Sigma_{p(i),p(i)})^{-1}\Sigma_{p(i),i}.
    \end{equation*}

Our aim is to estimate the total causal effect $\mathcal{C}(i\rightarrow j)$ in linear SCMs and rigorously quantify the remaining uncertainty. Formally, the total causal effect is defined as the unit change in the expectation of $X_j$ with respect to an intervention in $X_i$
\begin{equation*} 
   \mathcal{C}(i \rightarrow j):=\frac{\text{d}}{\text{d} x_i} \E[X_j|\text{ do}(X_i=x_i)]=\Sigma_{j,i|p(i)}(\Sigma_{i,i|p(i)})^{-1}.
\end{equation*}
While this parameter of interest is given as a simple function of the covariance matrix, difficulties arise in practice when the underlying causal structure is unknown. In such situations, the conditioning set is unknown and has to be inferred from data, which introduces structure uncertainty. We emphasize that due to the identifiability of Gaussian SCMs with equal error variances \citep[e.g.,][]{chen:2019,ghoshal:2018}, estimating the total causal effect is nevertheless a well-defined problem. 

A naive two-step approach that treats the two tasks of causal discovery and causal inference separately by learning the causal structure first and then calculating confidence intervals in the inferred model does not account for uncertainty in the structure. Furthermore, classical bootstrapping methods also fail to correctly account for model uncertainty due to singularities at the intersection of models. To address this issue, \citet{strieder:2021} introduced a framework for constructing confidence intervals that rigorously account for structure uncertainty, which was then generalized in \cite{strieder:2023}. However, thoroughly accounting for structure uncertainty proves a challenging task, given the enormous number of possible structures, even in moderate dimensions. 

\subsection{Confidence Intervals via Test Inversion}

The main idea is to leverage the classical duality between statistical hypothesis tests and confidence regions \citep{casella:book}.  To this end, we  consider all possible causal structures in the following way. If we perform valid hypothesis tests for all attainable causal effects, then all values that cannot be rejected form a confidence region for the total causal effect. This shifts the task to constructing suitable hypothesis tests for all attainable causal effects. 

Under the assumption of an underlying linear structural causal model with equal error variances, every possible distribution $N(0,\Sigma)$ is given by a set of covariance matrices $\mathcal{M}:=\bigcup_{G \in \mathcal{G}(d)} \mathcal{M}(G)$, where $\mathcal{G}(d)$ is the set of all possible complete DAGs on $d$ nodes and
\begin{equation*}
    \mathcal{M}(G)=\Big\{\Sigma \in \text{PD}(d): \exists  \sigma^2 >0 \text{ with } 
	    \sigma^2=\Sigma_{k,k|p(k)} \quad  \forall\  k=1, \dots , d \Big\}.
\end{equation*}
Here, we write $\text{PD}(d)$ for the cone of positive definite $d \times d$ matrices. The hypothesis of a fixed causal effect $\mathcal{C}(i \rightarrow j)$ of size $\psi$ further restricts the set of possible distributions, where the specific structure of the constraint depends on the DAG $G$. Thus, with
\begin{equation}\label{eq:hypothesis}
	     \mathcal{M}_{\psi}(G) := \Big\{ \Sigma \in \text{PD}(d) : \exists \sigma^2>0 \text{ with } 
	    \psi \sigma^2=\Sigma_{j,i|p(i)} \text{ and }
	    \sigma^2=\Sigma_{k,k|p(k)} \quad  \forall\  k=1, \dots , d
	    \Big\},
	\end{equation}
and $\mathcal{M}_\psi:=\bigcup_{G \in \mathcal{G}(d)} \mathcal{M}_\psi(G)$, the task is to invert the statistical testing problem 
\begin{align}\label{eq:testproblem}
    H_0^{(\psi)} :
	    \Sigma \in \mathcal{M}_\psi \quad \text{against} \quad H_1 : \Sigma \in \mathcal{M} \backslash \mathcal{M}_\psi
\end{align}
for all $\psi \in \R$.  

\section{Testing for Total Causal Effects with the Dual Likelihood}\label{section:result}

Earlier methods \citep{strieder:2023} propose to use constrained likelihood ratio tests for the testing problem $\eqref{eq:testproblem}$. However, the direct maximization of the Gaussian likelihood with constraints on the total causal effect has no closed-form solution, which leads to algorithms that rely on numerical optimization routines and grid searches. Instead, our proposal is to employ dual likelihood theory for Gaussian models \citep[e.g.,][]{brown:1986,kauermann:1996} to solve the testing problem $\eqref{eq:testproblem}$ and, thus,  obtain a simple closed-form solution.

To obtain our main result, we first introduce the dual likelihood for Gaussian graphical models. Maximizing the Gaussian likelihood corresponds to the minimization problem
\begin{equation*}
    \argmin_{\Sigma \in \mathcal{M}} \mathrm{KL}(P_{\widehat{\Sigma}},P_{\Sigma})=\argmin_{\Sigma \in \mathcal{M}} \big(\mathrm{tr}(\Sigma^{-1}\widehat{\Sigma})+\log \det(\Sigma)\big),
\end{equation*}
where KL is the Kullback–Leibler divergence with sample covariance $\widehat{\Sigma}$. As the Kullback–Leibler divergence is not symmetrical in its arguments, the minimization problem 
\begin{align*}
    \argmin_{\Sigma \in \mathcal{M}} \mathrm{KL}(P_{\Sigma},P_{\widehat{\Sigma}})&=\argmin_{\Sigma \in \mathcal{M}}\big(\mathrm{tr}(\widehat{\Sigma}^{-1}\Sigma)-\log \det(\Sigma) \big) \\
    &=\argmin_{\Sigma \in \mathcal{M}}\big(\mathrm{tr}(\Sigma\widehat{\Sigma}^{-1})+\log \det(\Sigma^{-1}) \big) 
\end{align*}
yields a different estimator, the dual maximum likelihood estimator. Along similar lines, the dual (log-)likelihood of $\Sigma$ for the observed parameter $\widehat{\Sigma}^{-1}$ is defined as
\begin{equation*}
    \tfrac{2}{n}\ell^{dual}_n(\Sigma|\widehat{\Sigma}^{-1}):=-\log \det (2\pi \Sigma^{-1}) - \mathrm{tr}(\Sigma\widehat{\Sigma}^{-1}).
\end{equation*}
We note that this dual likelihood is reciprocal to the Gaussian likelihood in the sense that for observed sample covariance $\widehat{\Sigma}^{-1}$, we have 
\begin{equation}\label{eq:equal}
    \ell^{dual}_n(\Sigma|\widehat{\Sigma}^{-1})=\ell_n(\Sigma^{-1}|\widehat{\Sigma}^{-1}),
\end{equation}
where $\tfrac{2}{n}\ell_n(\Sigma|\widehat{\Sigma}):=-\log \det (2\pi \Sigma) - \mathrm{tr}(\Sigma^{-1}\Hat{\Sigma}).$

\subsection{Dual Likelihood Ratio Test}
In order to construct confidence intervals for the total causal effects, our idea is to invert dual likelihood ratio tests of the testing problem \eqref{eq:testproblem} with the dual likelihood ratio test statistic 
\begin{align} \label{eq::dualstat}
    \text{dual-}\check{\lambda}^{(\psi)}_n:&=2\Big(\sup_{\Sigma \in  \mathcal{M}} \ell^{dual}_n(\Sigma|\widehat{\Sigma}^{-1})- \sup_{\Sigma \in  \mathcal{M}_{\psi}} \ell^{dual}_n(\Sigma|\widehat{\Sigma}^{-1})\Big)  \\
    &=2\Big(\sup_{\Sigma \in \mathcal{M}} \ell_n(\Sigma^{-1}|\widehat{\Sigma}^{-1})- \sup_{\Sigma \in  \mathcal{M}_{\psi}} \ell_n(\Sigma^{-1}|\widehat{\Sigma}^{-1})\Big). \nonumber
\end{align}
First, we observe that by employing equation \eqref{eq:equal}, this dual likelihood ratio test statistic equals the classical likelihood ratio test statistic for a modified problem. More precisely, we redefine the model space $\mathcal{M}^{-1}:=\bigcup_{G \in \mathcal{G}(d)} \mathcal{M}^{-1}(G)$ with
\begin{equation*}
    \mathcal{M}^{-1}(G):=\Big\{\Omega \in \text{PD}(d): \exists \tilde{\sigma}^2 >0 \text{ with } 
	     \tilde{\sigma}^2 =\Omega_{k,k|d(k)} \quad  \forall\  k=1, \dots , d \Big\},
\end{equation*}
where $\Omega=\Sigma^{-1}$. Similarly, we redefine the hypothesis space $\mathcal{M}^{-1}_\psi=\bigcup_{G \in \mathcal{G}(d)} \mathcal{M}^{-1}_\psi(G)$ with
\begin{align*}
    \mathcal{M}^{-1}_\psi(G):=\Big\{\Omega \in \text{PD}(d): \exists \tilde{\sigma}^2>0 \text{ with } 
	    \tilde{\sigma}^2 &=\Omega_{k,k|d(k)} \quad  \forall\  k=1, \dots , d \\ \text{ and }
	   \psi &=\tau_j\big(\Omega_{i,d(i)}(\Omega_{d(i),d(i)})^{-1}\big) \Big\},
\end{align*}
where $\tau_j$ projects the $|d(i)|$-dimensional vector onto the component corresponding to $j$ if $j\in d(i)$ and zero otherwise. Then,  we obtain the following Lemma.
\begin{lemma}\label{lemma::set} Let $\Sigma \in \text{PD}(d)$. Then the following equivalences hold.
\begin{enumerate}
    \item $\Sigma \in \mathcal{M}$ if and only if $\Sigma^{-1} \in \mathcal{M}^{-1}$.
    \item $\Sigma \in \mathcal{M}_{\psi}$ if and only if $\Sigma^{-1} \in \mathcal{M}^{-1}_{\psi}$.
\end{enumerate}
\end{lemma}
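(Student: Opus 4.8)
The plan is to fix a single complete DAG $G\in\mathcal{G}(d)$ and prove both equivalences at the level of this $G$; since $\mathcal{M}^{-1}(G)$ and $\mathcal{M}^{-1}_\psi(G)$ are defined from the descendant sets of the \emph{same} graph $G$, taking unions over $G\in\mathcal{G}(d)$ then yields the two statements of the Lemma. After relabelling I may assume the induced order is $1<_G 2<_G\cdots<_G d$, so that in the complete DAG $p(k)=\{1,\dots,k-1\}$ and $d(k)=\{k+1,\dots,d\}$; note in particular that $p(k)$ and $d(k)$ are complementary.

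Given this order, write the modified Cholesky factorization $\Sigma=(I_d-B)^{-1}\Lambda(I_d-B)^{-T}$ with $B$ strictly lower triangular and $\Lambda$ a positive diagonal matrix, and set $L:=(I_d-B)^{-1}$, which is unit lower triangular. Its regression interpretation gives $\Lambda_{kk}=\Sigma_{k,k|p(k)}$ together with $L_{j,i}=\Sigma_{j,i|p(i)}(\Sigma_{i,i|p(i)})^{-1}=\mathcal{C}(i\to j)$. Inverting, $\Omega=\Sigma^{-1}=(I_d-B)^{T}\Lambda^{-1}(I_d-B)=:U\Lambda^{-1}U^{T}$ with $U=(I_d-B)^{T}$ unit upper triangular; read in the reversed order $d,\dots,1$ this is again a Cholesky factorization, so its $k$-th diagonal entry is the reversed conditional variance $\Omega_{k,k|d(k)}$. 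Comparing diagonals yields the reciprocity $\Omega_{k,k|d(k)}=\Lambda_{kk}^{-1}=1/\Sigma_{k,k|p(k)}$, and part (1) follows at once: $\Sigma_{k,k|p(k)}=\sigma^{2}$ for all $k$ if and only if $\Omega_{k,k|d(k)}=1/\sigma^{2}$ for all $k$, i.e.\ with $\tilde\sigma^{2}=1/\sigma^{2}$.

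For part (2) it remains to translate the single effect constraint. The factorization $\Omega=U\Lambda^{-1}U^{T}$ exhibits $\Omega$ as the covariance of the reversed-order structural model whose unit upper triangular coefficient matrix is $U=(I_d-B)^{T}$; in that model $d(i)$ is exactly the parent set of $i$, so the population regression $\Omega_{i,d(i)}(\Omega_{d(i),d(i)})^{-1}$ returns the reversed structural coefficients of $i$ on $d(i)$, and $\tau_j$ isolates the one associated with $j$. A direct computation expresses this coefficient through the entry $L_{j,i}$ of $L=(I_d-B)^{-1}$ and thereby identifies it with the total causal effect $\mathcal{C}(i\to j)$. Together with the variance reciprocity of part (1), the pair of constraints defining $\mathcal{M}_\psi(G)$ — namely $\Sigma_{k,k|p(k)}=\sigma^{2}$ for all $k$ and $\Sigma_{j,i|p(i)}=\psi\sigma^{2}$ — is then equivalent to $\Omega_{k,k|d(k)}=\tilde\sigma^{2}$ for all $k$ together with $\tau_j(\Omega_{i,d(i)}(\Omega_{d(i),d(i)})^{-1})=\psi$, which is precisely $\Omega\in\mathcal{M}^{-1}_\psi(G)$.

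I expect the main obstacle to be this last identification of the sub-block regression. The conditional variances are read straight off the diagonals of the two triangular factorizations, but the effect constraint is an \emph{off-diagonal} quantity and, in contrast to the forward adjustment formula defining $\mathcal{C}(i\to j)$ — which conditions on $p(i)=\{1,\dots,i-1\}$ — the dual constraint conditions on the \emph{complementary} set $d(i)=\{i+1,\dots,d\}$. Reconciling these two different conditioning sets is the crux: one must verify, via the reversed factorization $\Omega=(I_d-B)^{T}\Lambda^{-1}(I_d-B)$, that regressing $Y_i$ on its reversed parents $d(i)$ recovers the same effect as the forward regression of $X_j$ on $X_i$ adjusting for $p(i)$, all while carefully tracking the transposition and signs introduced by passing to $\Sigma^{-1}$ and to the reversed order. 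Uniqueness of the triangular factorizations is the cleanest tool to make both the diagonal (variance) and the off-diagonal (effect) matching rigorous.
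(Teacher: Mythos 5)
Your overall route is the same as the paper's: factor $\Sigma$ along the causal ordering of a fixed complete DAG $G$, recognize $\Sigma^{-1}$ as the covariance matrix of a reversed structural model in which the parent set of $i$ is $d(i)$, translate the two defining constraints, and take unions over $G$ at the end. Your part (1) is correct and in fact more explicit than the paper's one-line sketch: uniqueness of the triangular factorization plus the regression reading of the two Cholesky diagonals gives $\Omega_{k,k|d(k)}=1/\Sigma_{k,k|p(k)}$, so the equal-variance constraints match with $\tilde\sigma^2=1/\sigma^2$.

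The problem is in part (2), precisely at the step you yourself flag as the crux and leave as ``a direct computation'': that computation produces a sign you have dropped. In the reversed model $\Omega=U\Lambda^{-1}U^{T}$ with $U=(I_d-B)^{T}$, the matrix $U$ is the \emph{mixing} matrix (the map from errors to variables), not the coefficient matrix; the structural coefficients are $\tilde{B}=I_d-U^{-1}=I_d-L^{T}$, so regressing $Y_i$ on its parents $Y_{d(i)}$ and applying $\tau_j$ returns
\begin{equation*}
\tau_j\bigl(\Omega_{i,d(i)}(\Omega_{d(i),d(i)})^{-1}\bigr)=\tilde{B}_{i,j}=-L_{j,i}=-\mathcal{C}(i\rightarrow j),
\end{equation*}
i.e.\ $-\psi$ rather than $+\psi$. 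A two-dimensional check makes this concrete: for $X_1=\varepsilon_1$, $X_2=bX_1+\varepsilon_2$ with $\sigma^2=1$ one gets $\Omega_{12}/\Omega_{22}=-b$ while $\mathcal{C}(1\rightarrow 2)=b$. The paper's own proof tracks exactly this sign: it writes $\Sigma^{-1}=\sigma^{-2}(I_d-T)^{-1}(I_d-T)^{-T}$ where ``$T$ represents the \emph{negative} total causal effect,'' and the remainder of the paper is consistent with that convention (the constrained search space $\R^{-G,\psi}$ fixes $t_{i,j}=-\psi$, which is what produces the $+2\psi$ cross term in \eqref{eq:constrdual}). So your assertion that the dual regression coefficient ``identifies with $\mathcal{C}(i\rightarrow j)$'' is false as stated; the identification holds with a minus sign. (The displayed definition of $\mathcal{M}^{-1}_\psi(G)$ in the paper carries the same sign slip, so your conclusion happens to match the printed set; but a correct proof must either flip the sign in the identification or note explicitly that the constraint in $\mathcal{M}^{-1}_\psi(G)$ is to be read as $-\psi=\tau_j(\cdot)$.) Since this single off-diagonal identity is the only nontrivial content of part (2), asserting it without carrying out the computation---and with the wrong sign---is a genuine gap.
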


\begin{proof}
This result follows from straightforward calculations using the following observation. If $\Sigma = \sigma^2(I_d-B)^{-1}(I_d-B)^{-T}$ for some DAG $G$ with edge weights $B$ and equal error variance $\sigma^2 >0$, then $\Sigma^{-1} = \sigma^{-2}(I_d-T)^{-1}(I_d-T)^{-T}$, where $T$ represents the negative total causal effect between variables. Thus, $\Sigma^{-1}$ corresponds to the covariance matrix of a DAG $G'$, where the parent sets $p(i)$ in $G'$ are the descendants $d(i)$ in $G$ for all $i= 1, \dots , d$, the edge weights are given by $T$ and equal error variance $\sigma^{-2}$. Constraints on the total causal effects in $\Sigma$ and $G$ thus correspond to constraints on direct effects in $\Sigma^{-1}$ and $G'$.
\end{proof}

Using Lemma \ref{lemma::set}, we immediately obtain that the dual likelihood ratio test statistic \eqref{eq::dualstat} corresponds to 
\begin{align*}
    \text{dual-}\check{\lambda}^{(\psi)}_n&=2\Big(\sup_{\Omega \in \mathcal{M}^{-1}} \ell_n(\Omega|\widehat{\Sigma}^{-1})- \sup_{\Omega \in  \mathcal{M}^{-1}_{\psi}} \ell_n(\Omega|\widehat{\Sigma}^{-1})\Big),
\end{align*}
which is the classical likelihood ratio test statistic for the modified testing problem 
\begin{equation}\label{eq:testproblem2}
 \tilde{H}_0^{(\psi)} :
	    \Omega \in \mathcal{M}^{-1}_\psi \quad \text{against} \quad \tilde{H}_1 : \Omega \in \mathcal{M}^{-1} \backslash \mathcal{M}^{-1}_\psi,
\end{equation}
with observed sample covariance $\widehat{\Sigma}^{-1}$. Thus, we can employ similar strategies to \citet{strieder:2023} in order to obtain confidence regions for total causal effects.

\subsection{Confidence Intervals with Dual Likelihood}

In this subsection, we state our main result, a confidence region for total causal effects $\mathcal{C}(i\rightarrow j)$ that captures structure uncertainty as well as uncertainty about the numerical size of the effect. Our first step to tackle the testing problem \eqref{eq:testproblem2} is to relax the alternative $\mathcal{M}^{-1}$ and employ the theory of intersection union tests, see e.g.~\cite{casella:book}, to obtain a simple upper bound on the distribution of the dual likelihood ratio test statistic via
\begin{equation*}
    \text{dual-}\check{\lambda}^{(\psi)}_n\leq \text{dual-}\lambda^{(\psi)}_n(G):= 2\Big(\sup_{\Omega \in \text{PD}(d)} \ell_n(\Omega|\widehat{\Sigma}^{-1})- \sup_{\Omega\in  \mathcal{M}^{-1}_{\psi}(G)} \ell_n(\Omega|\widehat{\Sigma}^{-1})\Big).
\end{equation*}
Further, we define $\text{dual-}\check{\lambda}_n^{(\psi)}(i <_G j):=\min_{ G \in \mathcal{G}(d)\,:\, i <_G j} \text{dual-}\check{\lambda}_n^{(\psi)}(G)$, where
\begin{equation*}
     \text{dual-}\check{\lambda}^{(\psi)}_n(G):=2\Big(\sup_{\Omega \in \mathcal{M}^{-1}} \ell_n(\Omega|\widehat{\Sigma}^{-1})- \sup_{\Omega \in  \mathcal{M}^{-1}_{\psi}(G)} \ell_n(\Omega|\widehat{\Sigma}^{-1})\Big).
\end{equation*}

Then we obtain our following main result via test inversions of asymptotically conservative hypothesis test based on the asymptotic distribution of the upper bound $\text{dual-}\lambda^{(\psi)}_n(G)$ under every single hypothesis $\mathcal{M}^{-1}_\psi(G)$, see Lemma \ref{theorem:asymptotics}, as well as the closed-form solution of the dual likelihood estimation under constraints on the total causal effect, see Section \ref{section:estimation}. 

\begin{theorem}\label{theorem:duallrtinterval}
    Let $\alpha \in (0,1)$. Then an asymptotic $(1-\alpha)$-confidence set for the total causal effect $\mathcal{C}(i\rightarrow j)$ is given by
    \begin{align*}
            C := \{\psi \in \R  :  \text{\normalfont{dual-}}\check{\lambda}_n^{(\psi)}(i <_G j)\leq \chi_{d,1-\alpha}^2 \} \cup \{0 : \text{\normalfont{dual-}}\check{\lambda}_n^{(0)}(j <_G i)\leq \chi_{d-1,1-\alpha}^2\}.
    \end{align*}
    This confidence set can be constructed explicitly as follows. Define 
    \begin{align*}
        K&:=\min_{G \in \mathcal{G}(d)}\sum_{k=1}^{d}(\widehat{\Sigma}^{-1})_{k,k|d(k)}, &
        Z&:=\min_{G \in \mathcal{G}(d): j <_G i}\sum_{k=1}^{d}(\widehat{\Sigma}^{-1})_{k,k|d(k)},
    \end{align*}
    and for all $G \in \mathcal{G}(d)$ with $i <_G j$, 
    \begin{equation*}
        D_G:= (\widehat{\Sigma}^{-1})_{i,j|d(i)\setminus \{j\}}^2-(\widehat{\Sigma}^{-1})_{j,j|d(i)\setminus \{j\}}\Big(\sum_{k\neq i}^{d}(\widehat{\Sigma}^{-1})_{k,k|d(k)} + (\widehat{\Sigma}^{-1})_{i,i|d(i)\setminus \{j\}} -K\exp\big(\tfrac{\chi^2_{d,1-\alpha}}{dn}\big)\Big).
    \end{equation*}
    Moreover, for $D_G \geq 0$, define
    \begin{equation*}
        L_G:=\frac{-(\widehat{\Sigma}^{-1})_{i,j|d(i)\setminus \{j\}}  - \sqrt{D_G}}{(\widehat{\Sigma}^{-1})_{j,j|d(i)\setminus \{j\}}},  \qquad U_G:=\frac{-(\widehat{\Sigma}^{-1})_{i,j|d(i)\setminus \{j\}}  + \sqrt{D_G}}{(\widehat{\Sigma}^{-1})_{j,j|d(i)\setminus \{j\}}}.
    \end{equation*}
    Then the closed-form solution for the confidence set $C$ is given by
    \begin{equation*}
         C= \bigcup_{G\in \mathcal{G}(d): D_G \geq 0 }\big[L_G,U_G\big] \, \bigcup \, \big\{ 0 : Z \leq K\exp\big(\tfrac{\chi^2_{d-1,1-\alpha}}{dn}\big) \big\}.
    \end{equation*}
\end{theorem}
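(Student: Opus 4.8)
The plan is to establish the theorem in two stages: first that the set $C$ is an asymptotically conservative $(1-\alpha)$ confidence set, and then that its defining inequalities reduce to the stated quadratic form. For coverage I would use the test--confidence duality together with the two bounds already recorded before the theorem. Write $\Omega_0=\Sigma_0^{-1}$ for the inverse of the true covariance and let $G^\ast\in\mathcal{G}(d)$ be the true complete DAG. If the true effect $\psi_0$ is nonzero then $i<_{G^\ast}j$, so $j\in d(i)$ and $\Omega_0\in\mathcal{M}^{-1}_{\psi_0}(G^\ast)$ by Lemma \ref{lemma::set}; hence
\[
\text{dual-}\check{\lambda}^{(\psi_0)}_n(i<_G j)\;\le\;\text{dual-}\check{\lambda}^{(\psi_0)}_n(G^\ast)\;\le\;\text{dual-}\lambda^{(\psi_0)}_n(G^\ast),
\]
and by Lemma \ref{theorem:asymptotics} the rightmost upper bound is asymptotically $\chi^2_d$ under $\mathcal{M}^{-1}_{\psi_0}(G^\ast)$, so the first component of $C$ covers $\psi_0$ with limiting probability at least $1-\alpha$. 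If $\psi_0=0$ then $j<_{G^\ast}i$, the projection $\tau_j$ vanishes identically, $\mathcal{M}^{-1}_0(G^\ast)=\mathcal{M}^{-1}(G^\ast)$ carries only the $d-1$ equal-variance constraints, and the same chain bounds $\text{dual-}\check{\lambda}^{(0)}_n(j<_G i)$ by a statistic that is asymptotically $\chi^2_{d-1}$; thus the second component covers $0$. As $C$ is the union of the two components, coverage by either one suffices.

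The second stage evaluates the suprema in closed form, which is the content I would draw from Section \ref{section:estimation}. By \eqref{eq:equal} each dual likelihood is a Gaussian likelihood in $\Omega$ evaluated against the observed $\widehat\Sigma^{-1}$, so $\sup_{\text{PD}(d)}\ell_n$ is attained at $\Omega=\widehat\Sigma^{-1}$. For a single $\mathcal{M}^{-1}(G)$ I would factor the model as a Gaussian DAG with reversed parent sets $d(k)$ and a common conditional variance $\tilde\sigma^2$: profiling out the regression coefficients replaces the $k$-th residual by the empirical conditional variance $s_k:=(\widehat\Sigma^{-1})_{k,k|d(k)}$, and profiling out $\tilde\sigma^2$ gives $\tilde\sigma^2=\tfrac1d\sum_k s_k$, whence
\[
\tfrac{2}{n}\sup_{\Omega\in\mathcal{M}^{-1}(G)}\ell_n=-d\log(2\pi)-d-d\log\!\big(\tfrac1d\textstyle\sum_k s_k\big).
\]
Minimizing $\sum_k s_k$ over $\mathcal{G}(d)$ yields $K$ and therefore $\sup_{\mathcal{M}^{-1}}\ell_n$. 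For $\mathcal{M}^{-1}_\psi(G)$ the single effect constraint fixes the coefficient of $X_j$ in node $i$'s structural equation (to $-\psi$, in accordance with the sign convention of Lemma \ref{lemma::set} that $T$ encodes the negative total effect); by the partial-regression identity this leaves every $s_k$ with $k\neq i$ and the pooling untouched and raises only the $i$-th residual from $s_i=(\widehat\Sigma^{-1})_{i,i|d(i)}$ to $a\psi^2+2b\psi+c$, where $a=(\widehat\Sigma^{-1})_{j,j|d(i)\setminus\{j\}}$, $b=(\widehat\Sigma^{-1})_{i,j|d(i)\setminus\{j\}}$ and $c=(\widehat\Sigma^{-1})_{i,i|d(i)\setminus\{j\}}$.

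Assembling the three values the $\log\det$ terms cancel and, since the common-variance pooling enters every candidate uniformly,
\[
\text{dual-}\check{\lambda}^{(\psi)}_n(G)=n\,d\,\log\!\Big(\tfrac1K\big[\textstyle\sum_{k\neq i}s_k+a\psi^2+2b\psi+c\big]\Big).
\]
Hence $\text{dual-}\check{\lambda}^{(\psi)}_n(G)\le\chi^2_{d,1-\alpha}$ is equivalent to $a\psi^2+2b\psi+c'\le 0$ with $c'=\sum_{k\neq i}s_k+c-K\exp(\chi^2_{d,1-\alpha}/(dn))$; as $a>0$ this holds on a nonempty set exactly when $D_G=b^2-ac'\ge 0$, and then on $[L_G,U_G]$. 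Because $\text{dual-}\check{\lambda}^{(\psi)}_n(i<_G j)$ is the minimum over $G$ with $i<_G j$, its sublevel set is the union of these intervals, giving the first part of the closed form; the complementary orderings $j<_G i$ force the effect to $0$ and contribute the zero component, where $\min_{G:j<_G i}\sum_k s_k=Z$ and the $d-1$ degrees of freedom give the single condition $Z\le K\exp(\chi^2_{d-1,1-\alpha}/(dn))$.

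I expect the principal obstacle to be the constrained-estimation step: showing rigorously that imposing one linear constraint on node $i$'s coefficient, in the presence of the global equal-variance pooling, perturbs only the $i$-th residual by the clean quadratic $a\psi^2+2b\psi+c$ and does not couple across the remaining equations. This hinges on the Schur-complement identity $(\widehat\Sigma^{-1})_{i,i|d(i)}=c-b^2/a$ relating the two conditioning sets $d(i)$ and $d(i)\setminus\{j\}$, on the fact that the effect constraint touches only the structural equation of $i$, and on verifying that the profiled $\tilde\sigma^2$ is shared across nodes so that the per-graph comparison collapses to comparing $\sum_k s_k$. Once these are in place, the quadratic inequality and the minimization over orderings assemble into the displayed set $C$.
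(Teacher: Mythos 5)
Your proposal is correct and follows essentially the same route as the paper: coverage is obtained from the intersection--union bound $\text{dual-}\check{\lambda}^{(\psi)}_n(i<_G j)\le \text{dual-}\check{\lambda}^{(\psi)}_n(G^\ast)\le \text{dual-}\lambda^{(\psi)}_n(G^\ast)$ together with Lemma \ref{theorem:asymptotics}, and the closed form comes from the profiled dual likelihood estimates \eqref{eq:unconstrdual} and \eqref{eq:constrdual}, whose sublevel condition is a quadratic inequality in $\psi$ with discriminant $D_G$ and roots $L_G,U_G$. At one point you are actually more precise than the paper's own write-up: the paper speaks of $\text{dual-}\check{\lambda}^{(\psi)}_n(G)-\chi^2_{d,1-\alpha}$ as a strictly convex quadratic polynomial, whereas the statistic is really $nd\log\big(Q_\psi(G)/K\big)$ as you derive; passing through the exponential is exactly what produces the term $K\exp\big(\tfrac{\chi^2_{d,1-\alpha}}{dn}\big)$ in $D_G$, so your version explains the form of the constant.

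One correction to the coverage stage: your claim that $\psi_0=0$ implies $j<_{G^\ast}i$ is false in general. A zero total effect can also arise under an ordering with $i<_{G^\ast}j$ when contributions along multiple directed paths cancel; the paper points this out explicitly (``for zero-sized effects, we need to consider all causal orderings since multiple paths might cancel''). This omission does not break the proof: in that case $\Omega_0\in\mathcal{M}^{-1}_0(G^\ast)$ with $i<_{G^\ast}j$, so your first-component argument applies verbatim with $\psi_0=0$, using Lemma \ref{theorem:asymptotics} part (a) to get the $\chi^2_d$ limit. The case analysis should therefore be organized, as in the paper, by the ordering of $i$ and $j$ in a witnessing graph $G^\ast$ with $\Sigma\in\mathcal{M}_{\psi_0}(G^\ast)$, not by whether $\psi_0$ vanishes.
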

\begin{proof}
     Let $\psi \in \R$ and $\Sigma \in \mathcal{M}_{\psi}$. Since $\mathcal{M}_{\psi}=\bigcup_{G \in \mathcal{G}(d)} \mathcal{M}_{\psi}(G)$ there exists a complete graph $G$, such that  $\Sigma \in \mathcal{M}_{\psi}(G)$. If $i <_G j$, it follows with the introduced upper bound and Lemma \ref{theorem:asymptotics}
    \begin{equation*}
        P_{\Sigma}\Big( \text{dual-}\check{\lambda}^{(\psi)}_n > \chi_{d,1-\alpha}^2 \Big)   \leq P_{\Sigma}\Big( \text{dual-}\lambda^{(\psi)}_n (G) > \chi_{d,1-\alpha}^2 \Big) \rightarrow \alpha.
    \end{equation*} 
    Furthermore, plugging in the dual likelihood estimates \eqref{eq:unconstrdual} and \eqref{eq:constrdual} derived in Section \ref{section:estimation}, we view
    \begin{equation*}
        \text{dual-}\check{\lambda}^{(\psi)}_n(G) - \chi_{d,1-\alpha}^2=2\Big(\sup_{\Omega \in \mathcal{M}^{-1}} \ell_n(\Omega|\widehat{\Sigma}^{-1})- \sup_{\Omega \in  \mathcal{M}^{-1}_{\psi}(G)} \ell_n(\Omega|\widehat{\Sigma}^{-1})\Big)-\chi_{d,1-\alpha}^2
    \end{equation*}
     as a strictly convex quadratic polynomial in $\psi$. With $K:=\min_{G \in \mathcal{G}(d)}\sum_{k=1}^{d}(\widehat{\Sigma}^{-1})_{k,k|d(k)}$ and 
    \begin{equation*}
        D_G:= (\widehat{\Sigma}^{-1})_{i,j|d(i)\setminus \{j\}}^2-(\widehat{\Sigma}^{-1})_{j,j|d(i)\setminus \{j\}}\Big(\sum_{k\neq i}^{d}(\widehat{\Sigma}^{-1})_{k,k|d(k)} + (\widehat{\Sigma}^{-1})_{i,i|d(i)\setminus \{j\}} -K\exp\big(\tfrac{\chi^2_{d,1-\alpha}}{dn}\big)\Big),
    \end{equation*}
    this quadratic polynomial has real roots if $D_G \geq 0$. Thus, the inequality $\text{dual-}\check{\lambda}^{(\psi)}_n(G) \leq \chi^2_{d,1-\alpha}$ holds if and only if
    \begin{equation*} 
        \psi \in \Bigg[\frac{-(\widehat{\Sigma}^{-1})_{i,j|d(i)\setminus \{j\}}  - \sqrt{D_G}}{(\widehat{\Sigma}^{-1})_{j,j|d(i)\setminus \{j\}}}, \frac{-(\widehat{\Sigma}^{-1})_{i,j|d(i)\setminus \{j\}}  + \sqrt{D_G}}{(\widehat{\Sigma}^{-1})_{j,j|d(i)\setminus \{j\}}}  \Bigg].
    \end{equation*}
    We obtain an analogous result for $j <_G i$, and the test inversion approach yields the claim.
\end{proof}

Note that for non-zero effects, we do not need to test among graphs with $j <_G i$.  Thus, our confidence regions may consist of a non-zero interval and an isolated zero, reflecting the remaining uncertainty about the direction of the causal relation.  In contrast, for zero-sized effects, we need to consider all causal orderings since multiple paths might cancel. 

Our main result employs the following asymptotic distribution of the upper bound $\text{dual-}\lambda^{(\psi)}_n(G)$ under every single hypothesis $\mathcal{M}^{-1}_\psi(G)$.

\begin{lemma}\label{theorem:asymptotics}
	Let $G \in \mathcal{G}(d)$ be a DAG and let $\psi \in \R$. Then, the relaxed dual likelihood ratio test 
        statistic $\text{\normalfont{dual-}}\lambda^{\psi}_n(G)$ satisfies one of the following properties.
        \begin{itemize}
	    \item[a)] If $i <_G j$, then under the hypothesis $H_0^{(\psi)}(G)$
                \begin{equation*}
	           \text{\normalfont{dual-}}\lambda^{(\psi)}_n(G) \overset{\mathcal{D}}{\rightarrow}        \chi^2_d.
	        \end{equation*}
            \item[b)] If $j <_G i$, then under the hypothesis $H_0^{(0)}(G)$
                \begin{equation*}
	          \text{\normalfont{dual-}}\lambda^{(0)}_n(G) \overset{\mathcal{D}}{\rightarrow}        \chi^2_{d-1}.
	        \end{equation*}
	\end{itemize}
\end{lemma}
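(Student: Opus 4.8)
The plan is to recognize $\text{dual-}\lambda^{(\psi)}_n(G)$ as an ordinary Gaussian likelihood ratio statistic and then invoke Wilks-type asymptotics, with the degrees of freedom determined by a dimension count. By the reciprocity \eqref{eq:equal} already exploited above, $\ell_n(\Omega|\widehat{\Sigma}^{-1})$ is the Gaussian log-likelihood of a covariance matrix $\Omega$ evaluated at the ``sample covariance'' $\widehat{\Sigma}^{-1}$, so that $\text{dual-}\lambda^{(\psi)}_n(G)$ is exactly the likelihood ratio statistic for testing the submodel $\mathcal{M}^{-1}_\psi(G)$ against the saturated model $\text{PD}(d)$. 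Under the null $H_0^{(\psi)}(G)$ we have, by Lemma \ref{lemma::set}, that the true concentration matrix $\Omega_0 := \Sigma_0^{-1} \in \mathcal{M}^{-1}_\psi(G)$, and the task reduces to establishing the limiting null distribution of this classical statistic.

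The first substantive step is to justify that $\widehat{\Sigma}^{-1}$ may be treated, to first order, as a genuine Gaussian sample covariance matrix with population value $\Omega_0$. Since $\sqrt{n}(\widehat{\Sigma}-\Sigma_0)$ is asymptotically normal with the usual Gaussian covariance-of-covariance structure, the delta method applied to the smooth map $A\mapsto A^{-1}$ (with derivative $H\mapsto -\Omega_0 H \Omega_0$) shows that $\sqrt{n}(\widehat{\Sigma}^{-1}-\Omega_0)$ is asymptotically normal with covariance whose $(ab,cd)$ entry equals $(\Omega_0)_{ac}(\Omega_0)_{bd}+(\Omega_0)_{ad}(\Omega_0)_{bc}$. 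This is precisely the asymptotic covariance of the sample covariance of $n$ i.i.d.\ $\mathcal N(0,\Omega_0)$ vectors, so $\widehat{\Sigma}^{-1}$ matches the inverse Fisher information of the Gaussian family at $\Omega_0$; this is the content of dual likelihood theory \citep{brown:1986,kauermann:1996}. Consequently, the standard Wilks expansion applies to $\text{dual-}\lambda^{(\psi)}_n(G)$, and its limit is $\chi^2$ with degrees of freedom equal to the codimension of the null model inside $\text{PD}(d)$.

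It then remains to perform the dimension count, for which I would use the parametrization furnished by Lemma \ref{lemma::set}: an element of $\mathcal{M}^{-1}(G)$ is the covariance matrix of the reversed complete DAG $G'$ in which the parents of $i$ are the descendants $d(i)$, with free edge weights and a single common error variance $\tilde\sigma^2>0$. Thus $\mathcal{M}^{-1}(G)$ is a smooth embedded submanifold of $\text{PD}(d)$ of dimension $\binom{d}{2}+1$, hence of codimension $d-1$ (the equal-variance constraints). Crucially, the quantity $\tau_j\big(\Omega_{i,d(i)}(\Omega_{d(i),d(i)})^{-1}\big)$ appearing in $\mathcal{M}^{-1}_\psi(G)$ is, when $j\in d(i)$, exactly the $G'$-edge weight on $j\to i$, i.e.\ one of the free coordinates of this parametrization. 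When $i<_G j$, completeness of $G$ forces $j\in d(i)$, so fixing this coordinate to $\psi$ is a single \emph{active} regular constraint with non-vanishing gradient; the null model then has codimension $d$ and the limit is $\chi^2_d$, giving part a). When $j<_G i$, we have $j\notin d(i)$, so $\tau_j(\cdot)\equiv 0$ and the constraint $\psi=0$ is \emph{vacuous}, whence $\mathcal{M}^{-1}_0(G)=\mathcal{M}^{-1}(G)$ has codimension $d-1$ and the limit is $\chi^2_{d-1}$, giving part b).

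The main obstacle is the verification underlying the second step and the attendant regularity conditions: one must confirm that $\Omega_0$ lies in the relative interior of the relevant smooth manifold (so that $\tilde\sigma^2>0$ and no DAG-models collide at $\Omega_0$), that the parametrization of the complete-DAG equal-variance model has full-rank Jacobian there, and that the effect constraint is smooth with non-degenerate gradient — all of which hold because, for a fixed complete DAG, the effect constraint reduces to fixing a single free edge-weight coordinate. Once this local linearization is in place, the chi-squared limits with the stated degrees of freedom follow from the classical Wilks theorem; the only genuinely informative point is that the single effect constraint is active precisely when $i<_G j$ and degenerates to a tautology when $j<_G i$, which accounts for the drop from $d$ to $d-1$ degrees of freedom.
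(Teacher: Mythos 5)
Your proposal is correct, and its core is the same as the paper's: recognize $\text{dual-}\lambda^{(\psi)}_n(G)$ as a classical Gaussian likelihood ratio statistic for testing $\mathcal{M}^{-1}_\psi(G)$ against the saturated model $\text{PD}(d)$ with ``sample covariance'' $\widehat{\Sigma}^{-1}$, verify that the null set is a smooth submanifold, and read off the degrees of freedom from its codimension ($d$ when $i <_G j$; $d-1$ when $j <_G i$, because the effect constraint $\tau_j(\cdot)=0$ is vacuous). The executions differ in two instructive ways. First, for the geometry, the paper works with the implicit description: it writes $\mathcal{M}^{-1}_\psi(G)$ as the zero set of a constraint map $f_\psi$ (one effect constraint plus $d-1$ equal-variance constraints) and checks that its Jacobian has full rank $d$ via a triangular structure, citing \citet{drton:2009} for the resulting $\chi^2$ limits; you instead use the explicit Cholesky-type parametrization furnished by Lemma \ref{lemma::set} (reversed complete DAG $G'$, $\binom{d}{2}$ free edge weights, one common variance), under which the effect constraint is literally the fixing of a single coordinate, making the codimension count transparent without computing any derivatives. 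Second, you make explicit a step the paper absorbs into its references: the delta-method computation showing that $\sqrt{n}(\widehat{\Sigma}^{-1}-\Omega_0)$ has exactly the asymptotic covariance structure of a genuine sample covariance drawn from $N(0,\Omega_0)$, which is what licenses applying Wilks-type asymptotics to a statistic whose data argument is $\widehat{\Sigma}^{-1}$ rather than an empirical second-moment matrix; this renders your argument more self-contained on precisely the point where the dual-likelihood construction deviates from the classical setting. Two minor remarks: the coordinate you fix equals the $G'$-edge weight only up to sign (it is the negative total effect), which is immaterial to the codimension argument; and your concern about distinct DAG models ``colliding'' at $\Omega_0$ is moot, since the relaxed statistic involves only the single fixed graph $G$, so every point of $\mathcal{M}^{-1}_\psi(G)$ is a smooth point of the null model.
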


\begin{proof}
    We assume $i <_G j $. Without loss of generality, let $(1,2, \dots, d)$ be the causal ordering of $G$. Then $ \mathcal{M}^{-1}_\psi(G) = \{ \Omega \in \text{PD}(d) : f_\psi(\Omega|G)=0\}$, with
    \begin{equation*}
	    f_\psi(\Omega|G):=\begin{pmatrix}
	    \tau_j(\Omega_{i,\{i+1, \dots , d\}}(\Omega_{\{i+1, \dots , d\},\{i+1, \dots , d\}})^{-1}) - \psi\\ 
	    \Omega_{1,1|\{2, \dots , d\}}-\Omega_{d,d} \\
	    \vdots \\
	    \Omega_{d-1,d-1|d}-\Omega_{d,d}
	    \end{pmatrix}.
    \end{equation*}

    The Jacobian of $f_\psi$ has full rank $d$, which follows due to its (nonzero) triangular structure in derivatives for $\Omega_{k,k}$ for all $k=1, \dots , d-1$, while the first row has nonzero derivative for $ \Omega_{i,j}$ but zero derivatives for $ \Omega_{k,k}$ for all $k=1, \dots , i$. Thus, $\mathcal{M}^{-1}_\psi(G)$ defines a $\frac{1}{2}(d^2-d)$-dimensional submanifold of $\R^{\frac{1}{2}(d^2+d)}$. The case $j <_G i $ follows similarly, and thus, the claim follows. For details, we refer to \citet{drton:2009}.
\end{proof}

Note that the constraint of a zero-sized effect does not restrict the submanifold $\mathcal{M}^{-1}_0(G)$ for $j <_G i $. All edge weights in the corresponding linear SCMs can vary freely, 
 and thus, the two different degrees of freedom for the chi-square limits arise.

\subsection{Dual Likelihood Estimation}\label{section:estimation}

Maximizing the classical Gaussian likelihood under constraints on total causal effects leads to complex optimization problems with polynomial constraints that need to be solved with numerical optimization routines. In contrast to that, we can explicitly maximize the dual likelihood under constraints on total causal effects. Thus, as an important consequence of the use of dual likelihood, we are able to provide a closed-form solution for our proposed confidence regions in Theorem \ref{theorem:duallrtinterval}. 

In order to derive this closed-form solution, we need to calculate the dual likelihood ratio test statistic \eqref{eq::dualstat}. This involves maximizing the dual likelihood in two cases, the unrestricted case $\Sigma \in \mathcal{M}$, and under the constraint of a fixed-size total causal effect $\Sigma \in \mathcal{M}_{\psi}$. In the following, we solve the dual likelihood equations for a fixed graph $G$. In order to consider full uncertainty over all possible causal structures, we subsequently need to cleverly optimize over the space of DAGs, see Section \ref{section:shortcuts}. In the case $\Sigma \in  \mathcal{M}(G)$, we obtain
    \begin{align*}
        \tfrac{2}{n}\sup_{\Sigma \in  \mathcal{M}(G)} \ell^{dual}_n(\Sigma|\widehat{\Sigma}^{-1})&=\sup_{\Omega \in \mathcal{M}^{-1}(G)}-\log \det (2\pi \Omega) - \mathrm{tr}(\Omega^{-1}\Hat{\Sigma}^{-1}) \\
        &=\sup_{T \in \R^{-G}, \sigma^2>0} -d\log(2\pi \sigma^{-2}) - \sigma^2\mathrm{tr}\big((I-T)^{T}(I-T)\Hat{\Sigma}^{-1}\big),
    \end{align*}
    where $\R^{-G}:=\{T \in \R^{d \times d} : t_{k,l}=0 \text{ if } l \notin d(k) \}$. This immediately follows if we recapitulate 
    \begin{equation*}
        \Sigma = \sigma^2(I_d-B)^{-1}(I_d-B)^{-T} = \sigma^2 (I-T)^T (I-T),
    \end{equation*}
    where $T \in \R^{-G}$ represents the negative total causal effect between variables. Maximizing over the equal error variance $\sigma^2$ then leads to a linear least squares problem with the optimal solution
    \begin{equation}\label{eq:unconstrdual}
        \min_{T \in \R^G}\mathrm{tr}\big((I-T)^{T}(I-T)\Hat{\Sigma}^{-1}\big)= \sum_{k=1}^{d}(\widehat{\Sigma}^{-1})_{k,k|d(k)}.
    \end{equation}
In the second case of an additional constraint of a fixed size total causal effect $\mathcal{C}(i\rightarrow j)=\psi$, we can similarly first optimize over the equal error variance $\sigma^2$ to derive a linear least squares problem. That is, for $\Sigma \in \mathcal{M}_{\psi}(G)$,  maximizing the dual likelihood then corresponds to the linear least squares problem
    \begin{equation*}
        \min_{T \in \R^{-G,\psi}}\mathrm{tr}\big((I-T)^{T}(I-T)\Hat{\Sigma}^{-1}\big).
    \end{equation*}
Here, the search space is additionally restricted by the fixed size total causal effect $\mathcal{C}(i\rightarrow j)=\psi$ and given by $\R^{-G,\psi}:=\{T \in \R^{d \times d} : t_{k,l}=0 \text{ if } l \notin d(k), t_{i,j}=-\psi \}$. Solving this linear least squares problem with the constraint of one fixed parameter leads to the solution
    \begin{equation}\label{eq:constrdual}
       \sum_{k\neq i}^{d}(\widehat{\Sigma}^{-1})_{k,k|d(k)} + (\widehat{\Sigma}^{-1})_{i,i|d(i)\setminus \{j\}}+ \psi^2 (\widehat{\Sigma}^{-1})_{j,j|d(i)\setminus \{j\}} \\  + 2\psi (\widehat{\Sigma}^{-1})_{i,j|d(i)\setminus \{j\}}. 
    \end{equation}
Note that maximizing dual likelihood under constraints on total causal effects corresponds with the reciprocal property \eqref{eq:equal} to maximizing classical Gaussian likelihood under modified constraints on direct causal effects. Thus, while directly maximizing the classical Gaussian likelihood under constraints on the total causal effect is complicated due to the arising polynomial constraint on the parameters, we are able to obtain a closed-form solution for the maximum dual likelihood estimate since the constraint only pertains to one parameter.

\section{Computation of Confidence Regions for Total Causal Effects}\label{section:computation}

In the following section, we present important computational shortcuts as well as the results of a simulation study that analyzes the performance and computation times of our proposed confidence regions. 
     
\subsection{Computational Shortcuts}\label{section:shortcuts}

Our proposal of employing the dual likelihood to construct confidence regions for total causal effects resolves the need for numerical optimization routines of earlier methods and leads to a closed-form solution that significantly reduces computation times (see Figure \ref{fig:time2}). However, the bottleneck of the task itself is the superexponential growth of the number of possible causal structures with the number of nodes. Without any knowledge about the underlying causal relation, we have to consider all possible causal structures in order to make calibrated confidence statements that include the structure uncertainty. This, already at the scale of systems with $12$ involved variables, entails accounting for more than $10^{26}$ possible graphs. Using combinatorial shortcuts that quickly reduce the search space of plausible causal structures is thus extremely important to compute confidence regions that solve this task. 

First, we highlight that with our methodology, we only need to consider complete DAGs, thus reducing the search space to $d$ factorial structures for $d$ involved variables. Namely, we have to search through all permutations of $d$ nodes, which then correspond to all possible topological orderings of the involved variables. Furthermore, we employ a branch and bound type search algorithm to quickly reduce the space of plausible causal orderings before performing the testing procedure for all possible causal effects $\psi \in \R$. The main idea of this procedure in order to immediately reject implausible causal orderings is the following. We quickly approximate the maximum dual likelihood under the alternative $\Omega \in \mathcal{M}^{-1}$ with the dual likelihood estimate \eqref{eq:unconstrdual} under the causal ordering obtained via recursively minimizing $(\widehat{\Sigma}^{-1})_{k,k|d(k)}$ starting from the sink node. Then, recursively starting from the sink node, we search through the space of causal orderings in reverse and reject a partial ordering if the unrestricted partial dual likelihood estimate \eqref{eq:unconstrdual} already exceeds the threshold of the alternative. This is possible because the dual likelihood estimate \eqref{eq:unconstrdual} collapses into subproblems which only depend on descendants. Moreover, looking at \eqref{eq:constrdual}, the specific ordering among the descendants of $i$ is not relevant for the subsequent testing procedure for the precise total causal effects $\psi \in \R$. Therefore, in our branch and bound type search algorithm, it suffices to proceed with the partial ordering that achieves the highest dual likelihood among all plausible orderings with the same set of descendants $d(i)$. More specifically, at each step of our algorithm, we reject all partial orderings that do not achieve the highest dual likelihood among all plausible partial orderings of the same set of nodes and do not include $i$.

Combined, all these computational shortcuts vastly improve computation times and lead to a bottom-up procedure that allows us to compute the proposed confidence region in a reasonable time for moderate dimensions, see Section \ref{section:simulation}. We emphasize that considering the fast superexponential growth of the number of possible DAGs with the number of nodes, rigorously accounting for structure uncertainty over all DAGs is an intrinsically difficult task already in moderate dimensions. Further, we highlight the interesting distinction to the $\texttt{LRT}$-method of \citet{strieder:2023}, that by employing dual likelihood, our method leads to a bottom-up procedure, starting from the sink node, while their method is a top-down procedure, starting from the source node. 

\subsection{Simulation Study}\label{section:simulation}

In the following simulation study, we compare the width of the proposed confidence region as well as the computation times to the \texttt{LRT}-method by \citet{strieder:2023}. For more simulation results, we refer the reader to the Appendix \ref{appendix:simulation}. Our experiments are designed as follows. We generate $1000$ synthetic data sets based on linear SCMs corresponding to randomly selected DAGs. The edge weights are sampled from a normal distribution $N(\beta, 0.1)$, and the error terms are sampled from a standard normal distribution. Then we compute the confidence regions for the total causal effect $\mathcal{C}(1 \rightarrow 2)$ with confidence level $\alpha=0.05$. We repeat this procedure for a range of different average direct effect strengths $\beta$, sample sizes $n$, dimensions $d$, sparse or dense graphs, and for true non-zero total effects as well as for no total effect. 

First, we note that the theoretical coverage guarantees of our proposed confidence regions are based on (conservative) asymptotic critical values. Nevertheless, the confidence regions achieve the desired coverage in all our simulation settings, even in low sample sizes. We explore this in Table \ref{tab:cover} in the Appendix \ref{appendix:simulation}, which shows the empirical coverage frequencies against the sample size. Furthermore, while the employed dual likelihood estimators are asymptotically efficient in the sense of having the same asymptotic variance as ordinary maximum likelihood estimates, differences in finite samples are possible. Thus, we compare the performance of our confidence regions to the existing maximum likelihood approach. In Figure \ref{fig:width}, we compare the average width of the non-zero part of the confidence regions, reflecting the remaining uncertainty about the numerical size of the effect. We show the results for $d=10$ and $\beta=0.5 $ against the sample size. The main observation is that the difference in performance between both methods seems negligible. Similar observations can be made for other performance measures, which we defer to the Appendix \ref{appendix:simulation}. For example, we investigate how often the zero is included in the confidence regions when there is a true nonzero effect and observe no significant differences. Thus, both methods seem to be equally conclusive about the existence as well as the numerical size of the total causal effects. Further, the two methods also perform similarly under model misspecification of equal error variances and linearity.

    \begin{figure}[t]
    \centering
    \includegraphics[width=0.8\linewidth]{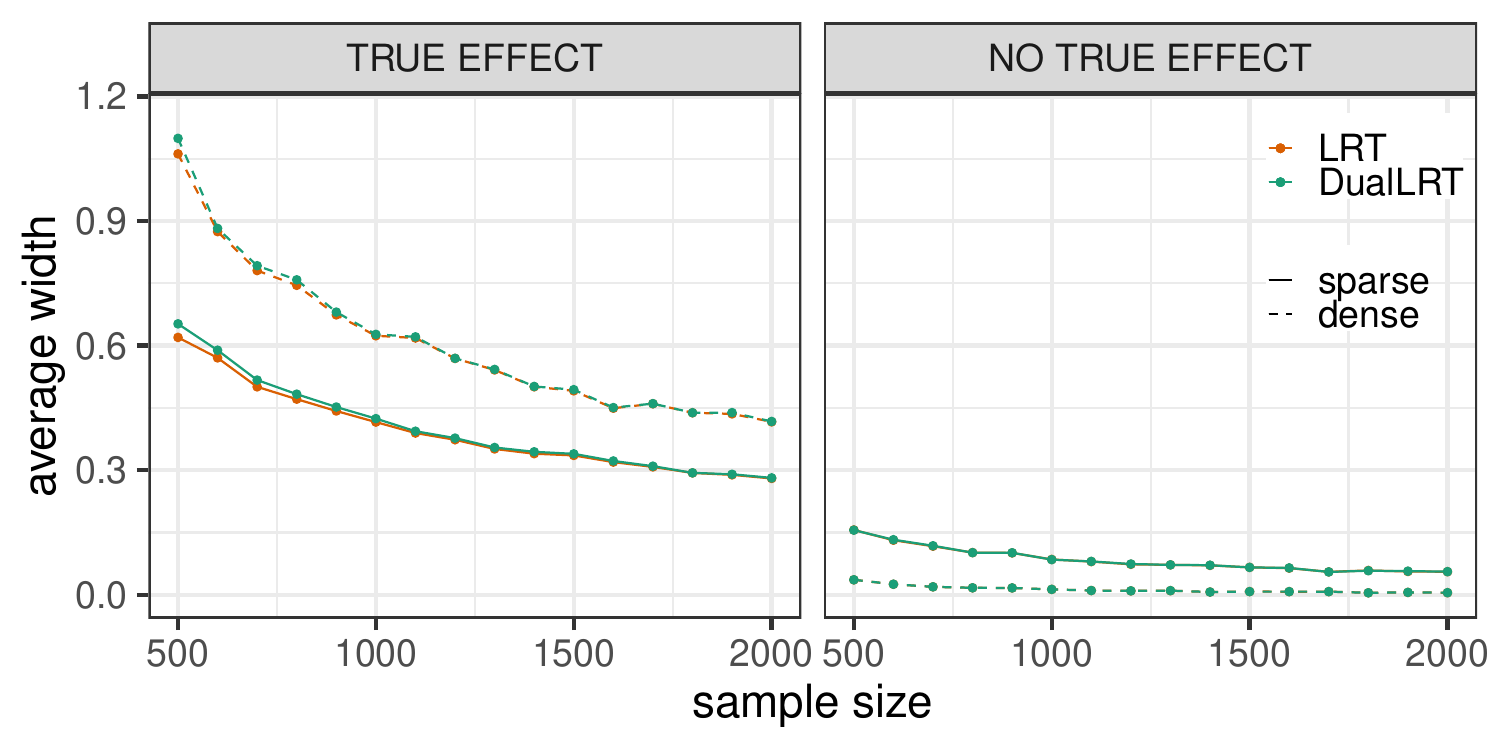}
    \caption{Mean width of $95\%$-confidence regions for the total causal effect in randomly generated $10$-dim. DAGs (1000 replications).}\label{fig:width}
    \end{figure}

However, the big advantage of the proposed dual likelihood method is the significantly reduced computation time. Employing the dual likelihood avoids time-consuming numerical optimization routines, and we immensely benefit from the closed-form solution calculated in Theorem \ref{theorem:duallrtinterval}. Figure \ref{fig:time1} shows the computation times for $\beta=0.5 $ and sample size $1000$ against the dimension. Our proposed confidence regions are faster to compute by a factor of up to $10^2$, without any compromises in terms of information value. This is even more apparent in Figure \ref{fig:time2}, which compares the computation times for $\beta=0.1 $ and sample size $1000$ against the dimension. In this setting, the generated data encompasses more structure uncertainty, given the lower average edge strength. Thus, the time difference between repeatedly applying numerical optimization algorithms versus directly computing the closed-form solution is more fatal, such that we are not able to compute the $\texttt{LRT}$-confidence regions in a reasonable time in dimensions beyond $d=8$. In contrast, the computation times of our proposed method seem to vary less with the amount of structure uncertainty inherent in the data set.

    \begin{figure}[t]
    \centering
    \includegraphics[width=0.82\linewidth]{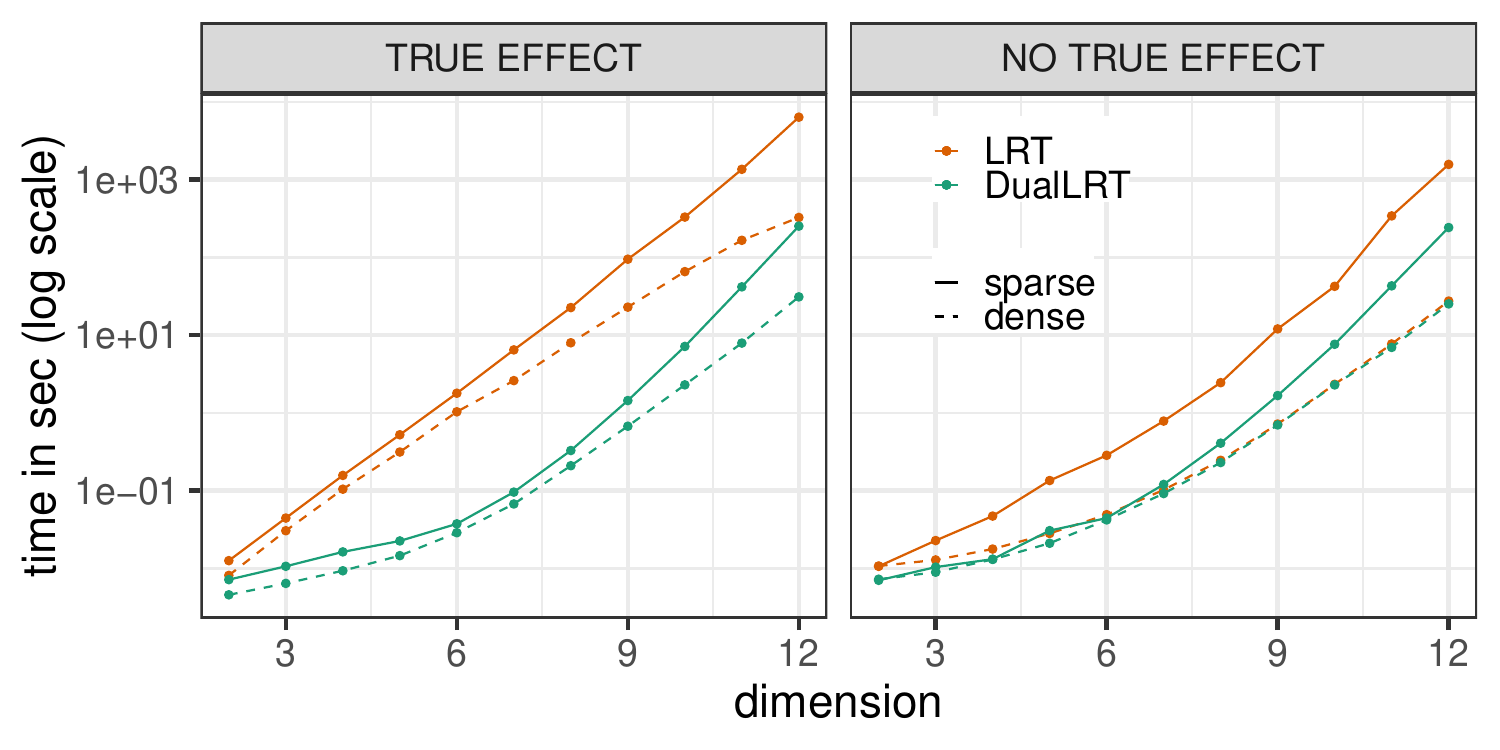}
    \caption{Mean computation times of $95\%$-confidence regions for the total causal effect in randomly generated DAGs with average edge strength $\beta = 0.5$ (1000 replications). }\label{fig:time1}
    \end{figure}
    
    \begin{figure}[t]
    \centering
    \includegraphics[width=0.82\linewidth]{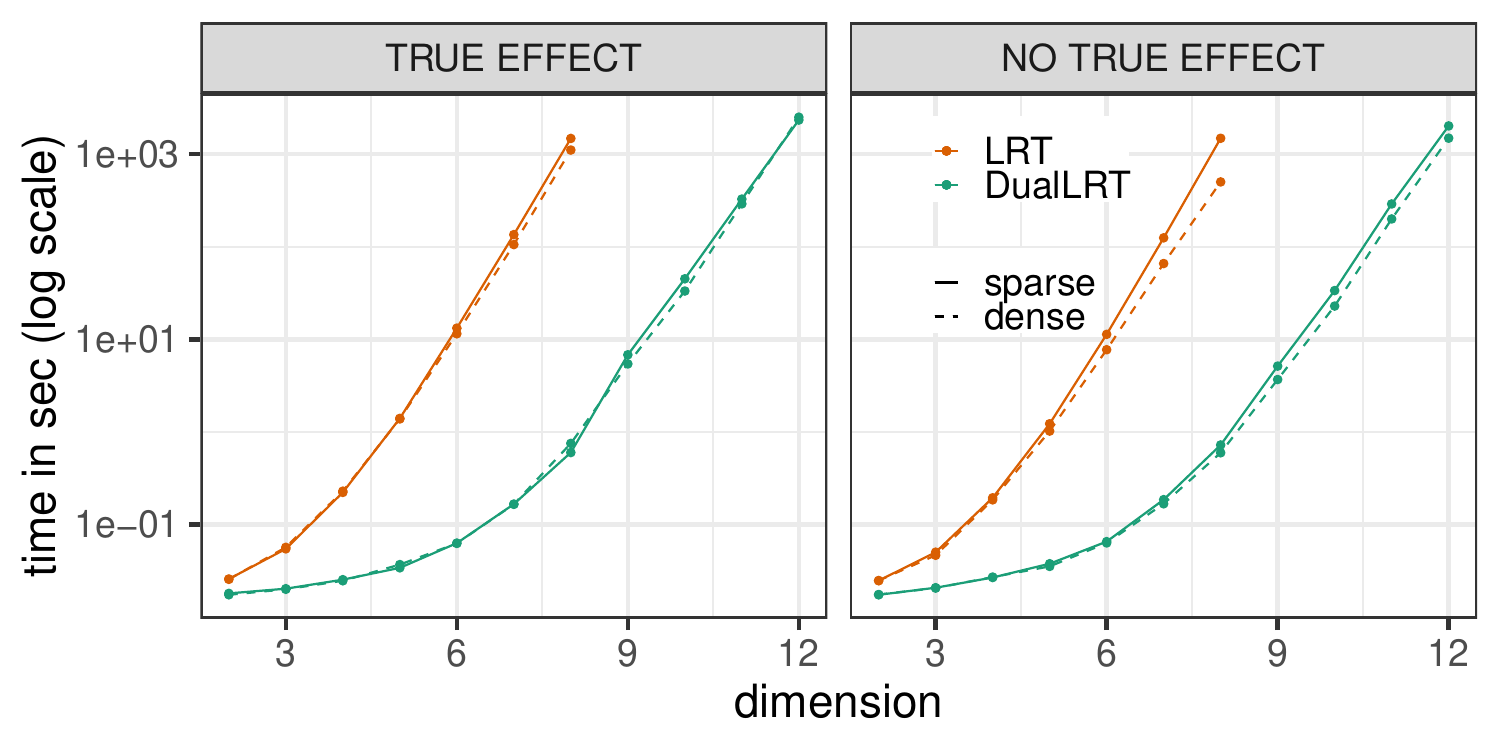}
    \caption{Mean computation times of $95\%$-confidence regions for the total causal effect in randomly generated DAGs with average edge strength $\beta = 0.1$ (1000 replications). }\label{fig:time2}
    \end{figure}

Furthermore, we highlight the observation that dense systems generally lead to wider confidence regions compared to sparse systems. Intuitively, for dense graphs more paths contribute to the total causal effects on average, and thus, more uncertainty about the numerical size of the effect remains in the data. In contrast, data generated by sparse DAGs generally exhibits more uncertainty about the underlying causal structure, which explains the observed increased computation times.

In summary, our simulation study validates that our proposed confidence regions based on the dual likelihood successfully pick up on the direction and the numerical size of the total causal effect while correctly quantifying the remaining uncertainty in structure as well as effect size. Furthermore, the computation times of our proposed confidence regions, computed with the proposed bottom-up procedure, are significantly lower than the competition. Here, we immensely benefit from the available closed-form solution due to employing dual likelihood. Especially under high structure uncertainty, avoiding repeated numerical optimization algorithms is crucial in order to compute the confidence regions in a reasonable time.

\section{Conclusion}\label{section:discussion}

In this article, we address the important problem of rigorously accounting for uncertainty in causal inference when the causal structure itself has to be learned from data. We provide a closed-form solution for constructing confidence regions for the total causal effect that captures structure uncertainty as well as uncertainty about the numerical size of the effect.  The confidence regions are developed for the concrete context of linear SCMs with equal error variances, where the underlying assumption of homoscedasticity across all interacting variables renders the causal structure identifiable. Thus, causal inference of the total causal effect without knowledge of the underlying structure is a well-defined task. An interesting conclusion of our work is that even in this specialized setting, we already observe high levels of uncertainty about causal directions and size of effects, which highlights the importance of methods that  properly account for structure uncertainty in causal inference.

One thing to note is that our general approach of leveraging test inversions of joint tests for causal structure and effect size is generalizable to other settings. In particular, if one deals with parametric models for which the causal DAG is identifiable, then likelihood ratio tests may be deployed with the same stochastic approximation strategies as in this paper, and we anticipate that the matrix inversion interplay between direct and total effects that is behind our use of dual likelihood can be similarly exploited.

While unique structure identifiability is required in order to consistently resolve uncertainty about nonzero causal effects (e.g., when facing full Markov equivalence in linear Gaussian models without variance assumption, the confidence region would always include zero), our framework could in principle be adjusted to target all effects that correspond to graphs within a Markov equivalence class. However, in cases where the causal structure is merely identifiable up to some faithfulness-type assumption, more research needs to be done to fully understand which causal effects are entailed by a given (possibly unfaithful) distribution.

The observed reciprocal correspondence between dual likelihood with total effects and classical likelihood with direct effects could be exploited for other causal inference tasks. The idea is to leverage tools developed for direct effects and classical likelihood to causal inference results for total effects via the dual likelihood. In our simulation study, the differences in performance and explanatory power that stem from employing dual likelihood ratio tests instead of classical likelihood ratio tests are negligible and heavily outweighed by the significantly reduced computation times due to the available closed-form solution.

Finally, given the fast superexponential growth of the number of possible causal structures with the dimension of the system, it is essential to quickly reject implausible orderings and reduce the search space in order to compute confidence regions that account for uncertainty over all structures. The introduced computational shortcuts lead to a bottom-up procedure that, starting from sink nodes, recursively reduces the search space by immediately rejecting implausible partial orderings. In contrast, using the classical likelihood leads to a top-down procedure that starts with source nodes. Cleverly alternating between both procedures in a first step to reduce the search space might even further decrease the computation times of our proposed confidence regions.

\begin{acks}[Acknowledgments]
This project has received funding from the European Research Council (ERC) under the European Union’s
Horizon 2020 research and innovation programme (grant agreement No. 83818). Further, this work has been
funded by the German Federal Ministry of Education and Research and the Bavarian State Ministry for
Science and the Arts. The authors of this work take full responsibility for its content.
\end{acks}

\bibliographystyle{imsart-nameyear}
\bibliography{bibliography.bib}

\begin{thebibliography}{13}

\bibitem[\protect\citeauthoryear{Brown}{1986}]{brown:1986}
\begin{bbook}[author]
\bauthor{\bsnm{Brown},~\bfnm{Lawrence~D.}\binits{L.~D.}}
(\byear{1986}).
\btitle{Fundamentals of statistical exponential families with applications in statistical decision theory}.
\bseries{Institute of Mathematical Statistics Lecture Notes---Monograph Series}
\bvolume{9}.
\bpublisher{Institute of Mathematical Statistics, Hayward, CA}.
\bmrnumber{882001}
\end{bbook}
\endbibitem

\bibitem[\protect\citeauthoryear{Casella and Berger}{1990}]{casella:book}
\begin{bbook}[author]
\bauthor{\bsnm{Casella},~\bfnm{George}\binits{G.}} \AND \bauthor{\bsnm{Berger},~\bfnm{Roger~L.}\binits{R.~L.}}
(\byear{1990}).
\btitle{Statistical inference}.
\bseries{The Wadsworth \& Brooks/Cole Statistics/Probability Series}.
\bpublisher{Wadsworth \& Brooks/Cole Advanced Books \& Software, Pacific Grove, CA}.
\bmrnumber{1051420}
\end{bbook}
\endbibitem

\bibitem[\protect\citeauthoryear{Chen, Drton and Wang}{2019}]{chen:2019}
\begin{barticle}[author]
\bauthor{\bsnm{Chen},~\bfnm{Wenyu}\binits{W.}}, \bauthor{\bsnm{Drton},~\bfnm{Mathias}\binits{M.}} \AND \bauthor{\bsnm{Wang},~\bfnm{Y.~Samuel}\binits{Y.~S.}}
(\byear{2019}).
\btitle{On causal discovery with an equal-variance assumption}.
\bjournal{Biometrika}
\bvolume{106}
\bpages{973--980}.
\bmrnumber{4031210}
\end{barticle}
\endbibitem

\bibitem[\protect\citeauthoryear{Drton}{2009}]{drton:2009}
\begin{barticle}[author]
\bauthor{\bsnm{Drton},~\bfnm{Mathias}\binits{M.}}
(\byear{2009}).
\btitle{Likelihood ratio tests and singularities}.
\bjournal{Ann. Statist.}
\bvolume{37}
\bpages{979--1012}.
\bmrnumber{2502658}
\end{barticle}
\endbibitem

\bibitem[\protect\citeauthoryear{Ghoshal and Honorio}{2018}]{ghoshal:2018}
\begin{binproceedings}[author]
\bauthor{\bsnm{Ghoshal},~\bfnm{Asish}\binits{A.}} \AND \bauthor{\bsnm{Honorio},~\bfnm{Jean}\binits{J.}}
(\byear{2018}).
\btitle{Learning linear structural equation models in polynomial time and sample complexity}.
In \bbooktitle{Proceedings of the Twenty-First International Conference on Artificial Intelligence and Statistics}
\bvolume{84}
\bpages{1466--1475}.
\bpublisher{PMLR}.
\end{binproceedings}
\endbibitem

\bibitem[\protect\citeauthoryear{Kauermann}{1996}]{kauermann:1996}
\begin{barticle}[author]
\bauthor{\bsnm{Kauermann},~\bfnm{G\"{o}ran}\binits{G.}}
(\byear{1996}).
\btitle{On a dualization of graphical {G}aussian models}.
\bjournal{Scand. J. Statist.}
\bvolume{23}
\bpages{105--116}.
\bmrnumber{1380485}
\end{barticle}
\endbibitem

\bibitem[\protect\citeauthoryear{Maathuis et~al.}{2019}]{handbook}
\begin{bbook}[author]
\beditor{\bsnm{Maathuis},~\bfnm{Marloes}\binits{M.}}, \beditor{\bsnm{Drton},~\bfnm{Mathias}\binits{M.}}, \beditor{\bsnm{Lauritzen},~\bfnm{Steffen}\binits{S.}} \AND \beditor{\bsnm{Wainwright},~\bfnm{Martin}\binits{M.}}, eds.
(\byear{2019}).
\btitle{Handbook of graphical models}.
\bseries{Chapman \& Hall/CRC Handbooks of Modern Statistical Methods}.
\bpublisher{CRC Press, Boca Raton, FL}.
\bmrnumber{3889064}
\end{bbook}
\endbibitem

\bibitem[\protect\citeauthoryear{Pearl}{2009}]{pearl:book}
\begin{bbook}[author]
\bauthor{\bsnm{Pearl},~\bfnm{Judea}\binits{J.}}
(\byear{2009}).
\btitle{Causality},
\bedition{Second} ed.
\bpublisher{Cambridge University Press, Cambridge}
\bnote{Models, reasoning, and inference}.
\bmrnumber{2548166}
\end{bbook}
\endbibitem

\bibitem[\protect\citeauthoryear{Peters and B\"{u}hlmann}{2014}]{peters:2014}
\begin{barticle}[author]
\bauthor{\bsnm{Peters},~\bfnm{Jonas}\binits{J.}} \AND \bauthor{\bsnm{B\"{u}hlmann},~\bfnm{Peter}\binits{P.}}
(\byear{2014}).
\btitle{Identifiability of {G}aussian structural equation models with equal error variances}.
\bjournal{Biometrika}
\bvolume{101}
\bpages{219--228}.
\bmrnumber{3180667}
\end{barticle}
\endbibitem

\bibitem[\protect\citeauthoryear{Peters, Janzing and Sch\"{o}lkopf}{2017}]{peters:book}
\begin{bbook}[author]
\bauthor{\bsnm{Peters},~\bfnm{Jonas}\binits{J.}}, \bauthor{\bsnm{Janzing},~\bfnm{Dominik}\binits{D.}} \AND \bauthor{\bsnm{Sch\"{o}lkopf},~\bfnm{Bernhard}\binits{B.}}
(\byear{2017}).
\btitle{Elements of causal inference}.
\bseries{Adaptive Computation and Machine Learning}.
\bpublisher{MIT Press, Cambridge, MA}.
\bmrnumber{3822088}
\end{bbook}
\endbibitem

\bibitem[\protect\citeauthoryear{Spirtes, Glymour and Scheines}{2000}]{spirtes:book}
\begin{bbook}[author]
\bauthor{\bsnm{Spirtes},~\bfnm{Peter}\binits{P.}}, \bauthor{\bsnm{Glymour},~\bfnm{Clark}\binits{C.}} \AND \bauthor{\bsnm{Scheines},~\bfnm{Richard}\binits{R.}}
(\byear{2000}).
\btitle{Causation, prediction, and search}.
\bseries{Adaptive Computation and Machine Learning}.
\bpublisher{MIT Press, Cambridge, MA}.
\bmrnumber{1815675}
\end{bbook}
\endbibitem

\bibitem[\protect\citeauthoryear{Strieder and Drton}{2023}]{strieder:2023}
\begin{barticle}[author]
\bauthor{\bsnm{Strieder},~\bfnm{David}\binits{D.}} \AND \bauthor{\bsnm{Drton},~\bfnm{Mathias}\binits{M.}}
(\byear{2023}).
\btitle{Confidence in causal inference under structure uncertainty in linear causal models with equal variances}.
\bjournal{J. Causal Inference}
\bvolume{11}.
\bmrnumber{4679813}
\end{barticle}
\endbibitem

\bibitem[\protect\citeauthoryear{Strieder et~al.}{2021}]{strieder:2021}
\begin{binproceedings}[author]
\bauthor{\bsnm{Strieder},~\bfnm{David}\binits{D.}}, \bauthor{\bsnm{Freidling},~\bfnm{Tobias}\binits{T.}}, \bauthor{\bsnm{Haffner},~\bfnm{Stefan}\binits{S.}} \AND \bauthor{\bsnm{Drton},~\bfnm{Mathias}\binits{M.}}
(\byear{2021}).
\btitle{Confidence in causal discovery with linear causal models}.
In \bbooktitle{Proceedings of the Thirty-Seventh Conference on Uncertainty in Artificial Intelligence. UAI'21}
\bvolume{161}
\bpages{1217--1226}.
\bpublisher{PMLR}.
\end{binproceedings}
\endbibitem

\end{thebibliography}

\clearpage
\appendix

\section{Additional Simulation Results}\label{appendix:simulation}

In this section, we present additional results of our simulation study. The data generation process and general simulation setup are outlined in Section \ref{section:simulation}. Table \ref{tab:cover} shows the observed coverage probabilities for $10$-dimensional graphs. Both methods seem to be conservative and achieve the desired coverage frequency. Considering that we employ critical values based on an upper bound as well as the theory of intersection union tests for the arising testing problem, it is not surprising that the resulting confidence regions are conservative. Further, note that under high structure uncertainty, it is not possible to compute the competitor \texttt{LRT} in a reasonable time, which we denoted with NA.

As an additional performance measure,  in Figure \ref{fig:zeros}, we compare the proportions of times zero is included in the confidence regions when there is a true nonzero effect. The inclusion of zero in the confidence regions reflects the remaining uncertainty about the existence of the effect. We show the results for $d=10$ and $\beta = 0.5$ against the sample size $n$ as well as for $d=10$ and sample size $n=1000$ against the average direct effect strength $\beta$. In terms of performance, we observe no significant differences. However, we note again that for high structure uncertainty, which is the case for low average direct effect strength $\beta$, it is not possible to compute the competitor \texttt{LRT} in a reasonable time.

Furthermore, with practical applications in mind, Figure \ref{fig:eqvar} investigates the robustness of the methods towards small deviations from equal error variances. Here, to generate data, we sample the error variances uniformly from $[1-0.5v, 1+0.5v]$, where $v$ indicates the degree of deviation from homoscedasticity across all variables. We show the empirical coverage probabilities for $d=10, \beta=0.5$, and sample size $n=1000$ against the degree of deviation $v$. Once again, we only observe negligible differences, and both methods indicate some robustness to small deviations from equal error variances.

Finally, to further explore aspects of model misspecification, we also investigate the robustness of the methods towards deviations from linear relations. We generate data as outlined before, however, with an increasing emphasis on an additional quadratic dependence structure via the relation $X_j= \sum_{i \in p(j)} \beta_i ((1+3v) X_i + 0.02 v X_i^2) + \varepsilon_j $, for all $j= 1 , \dots, d$. Thus, $v$ indicates the degree of deviation from linearity. The scaling ensures similar magnitudes of true causal effects across all investigated simulation settings. Note that for nonlinear relations, generally,  the true total causal effect is not given by a single parameter but rather by a functional relation. However, when applying linear models it is natural to consider as a parameter of interest the slope parameter determining the best linear approximation within the true causal structure.  
In Figure~\ref{fig:nonlinear}, we consider this `pseudo-linear' total effect and show the empirical coverage probabilities for $d=10, \beta=0.5$, and sample size $n=1000$ against the degree of deviation $v$. We observe that even under nonlinearity, the methods pick up on the direction of the causal relation, and thus, for no causal effect, zero is always covered by the confidence intervals. Note that in the considered nonlinear additive equal variance setting, the causal ordering is still implied by ordering conditional variances. Further, the performance differences between both methods are negligible, and for nonzero effects, both methods indicate some robustness to small deviations from linearity.

    \begin{table}[t]
    \centering
    \caption{Empirical coverage of $95\%$-confidence regions for the total causal effect in randomly generated $10$-dim. DAGs ($1000$ replications).}
    \resizebox{0.65\linewidth}{!}{
    \begin{tabular}{ rr|rr|rr|rr|rr }
    \toprule
    &  & \multicolumn{ 4 }{c|}{ TRUE EFFECT }  & \multicolumn{ 4 }{|c}{ NO TRUE EFFECT } \\
     &  & \multicolumn{ 2 }{c|}{ SPARSE }  & \multicolumn{ 2 }{|c|}{ DENSE } & \multicolumn{ 2 }{|c|}{ SPARSE } & \multicolumn{ 2 }{|c}{ DENSE } \\
    method & $n$\raisebox{0.1cm}{$\setminus\beta$}  & 0.1 & 0.5  & 0.1  & 0.5 & 0.1 & 0.5  & 0.1 & 0.5 \\ 
    \midrule
    \multirow{3}{*}{DualLRT} & 500 & 1.00 & 1.00 & 1.00 & 1.00 & 1.00 & 1.00 & 1.00 & 1.00  \\
    & 1000 & 0.99 & 1.00 & 1.00 & 1.00 & 1.00 & 1.00 & 1.00 & 1.00  \\
    & 2000 & 1.00 & 0.99 & 0.99 & 1.00 & 1.00 & 1.00 & 1.00 & 1.00  \\
    \midrule
    \multirow{3}{*}{LRT} & 500 &  NA & 0.99 & NA & 1.00 & NA & 1.00 & NA & 1.00  \\
    & 1000 & NA & 1.00 & NA & 1.00 & NA & 1.00 & NA & 1.00  \\
    & 2000 & NA & 1.00 & NA & 1.00 & NA & 1.00 & NA & 1.00   \\
    \bottomrule
    \end{tabular}%
    }
    \label{tab:cover}
    \end{table}

    \begin{figure}[t]
    \centering
    \includegraphics[width=0.82\linewidth]{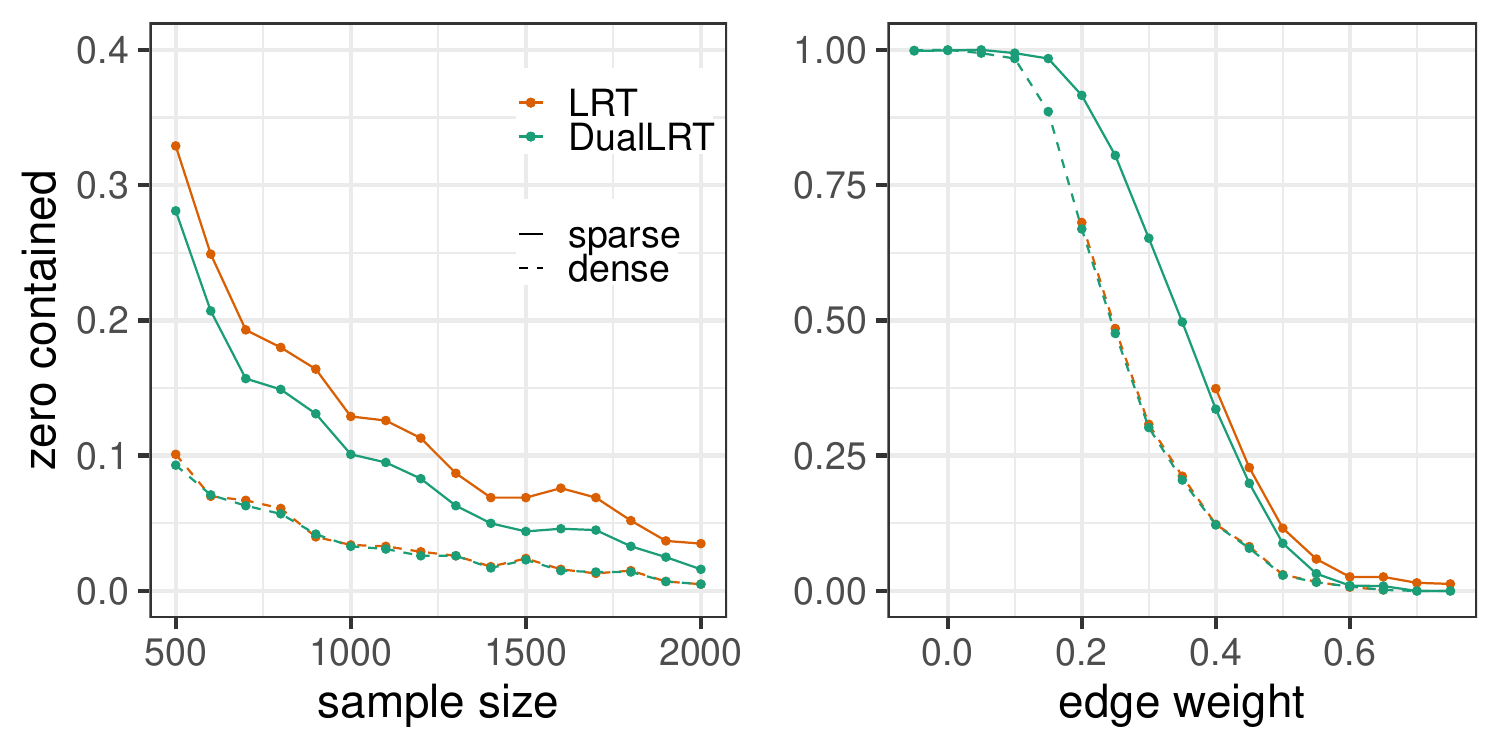}
    \caption{Proportion of times zero contained in $95\%$-confidence regions for the total causal effect in randomly generated $10$-dim. DAGs with true non-zero effect (1000 replications). }\label{fig:zeros}
    \end{figure}

    \begin{figure}[t]
    \centering
    \includegraphics[width=0.82\linewidth]{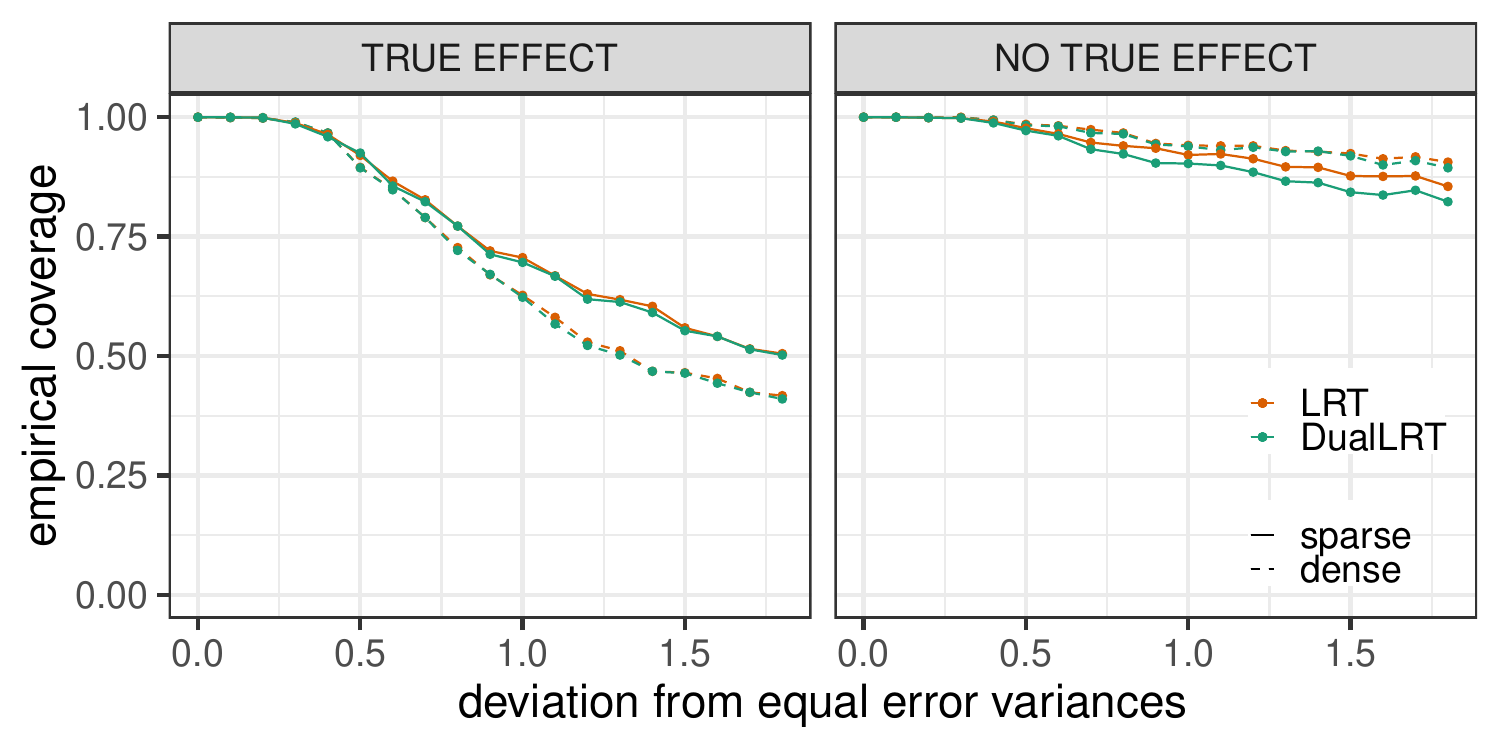}
    \caption{Empirical coverage of $95\%$-confidence regions for the total causal effect in randomly generated $10$-dim. DAGs under departure from equal error variances (1000 replications). }\label{fig:eqvar}
    \end{figure}

    \begin{figure}[t]
    \centering
    \includegraphics[width=0.82\linewidth]{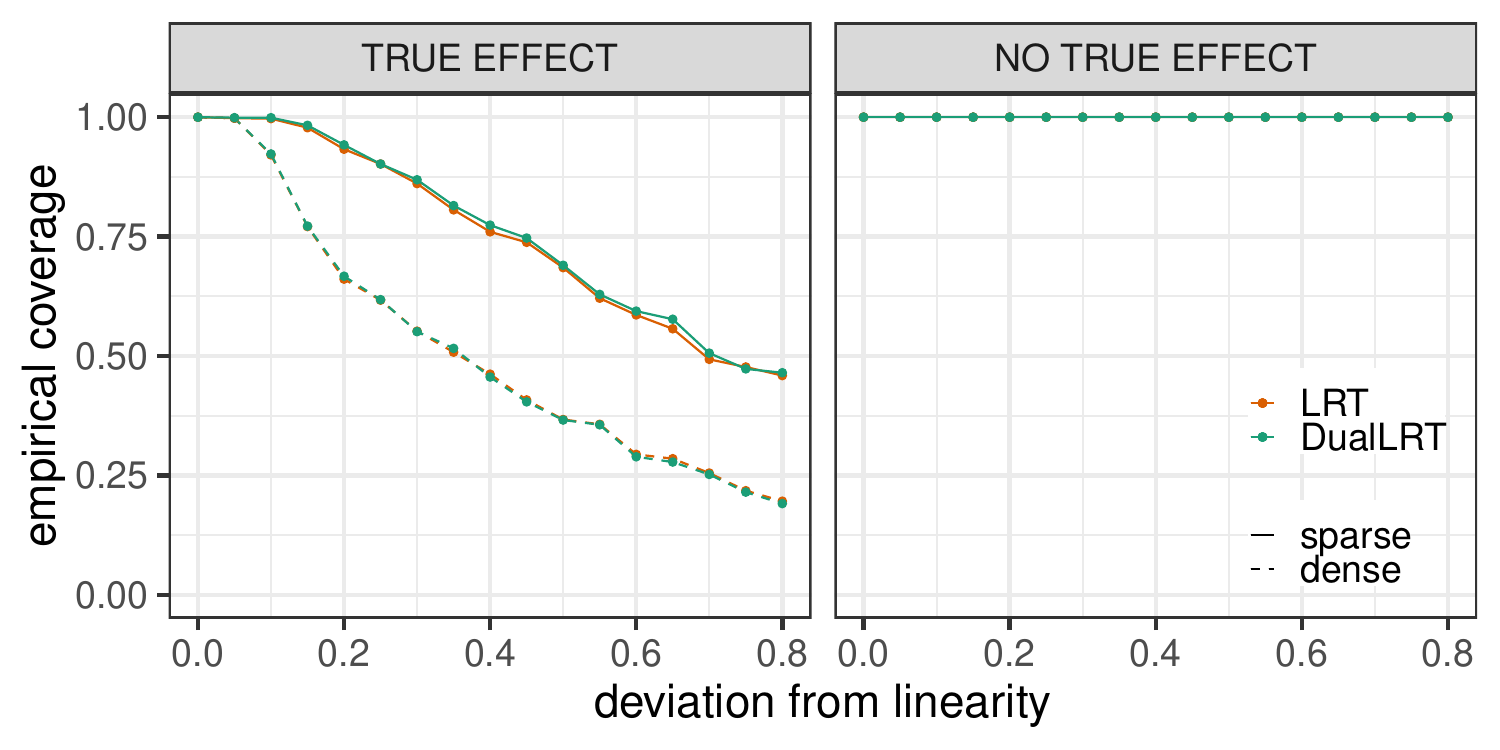}
    \caption{Empirical coverage of $95\%$-confidence regions for the total causal effect in randomly generated $10$-dim. DAGs under departure from linearity (1000 replications). }\label{fig:nonlinear}
    \end{figure}

\end{document}